\newcommand{\NN}{\mathrm{NN}}
\newcommand{\dH}{\mathrm{d_H}}
\newcommand{\JS}{\mathrm{JS}}
\newcommand{\tb}{\mathrm{\psi}}
\newcommand{\Sq}{\mathcal{S}_q}
\newcommand{\N}{\mathrm{N}}
\newcommand{\BCS}{\mathrm{BCS}}
\newtheorem{lem}[theorem]{Lemma}
\newtheorem{prop}[theorem]{Proposition}
\newtheorem{remark}[theorem]{Remark}
\newtheorem{note}[theorem]{Note}
\newcommand\numberthis{\addtocounter{equation}{1}\tag{\theequation}}
\begin{document}
\title{Efficient Compression Technique for Sparse Sets}

\author{Rameshwar Pratap}
\affiliation{%
  \institution{}
}
\email{rameshwar.pratap@gmail.com}

\author{Ishan Sohony}
\affiliation{%
  \institution{PICT, Pune}
}
\email{ishangalbatorix@gmail.com}

\author{Raghav Kulkarni}
\affiliation{%
  \institution{Chennai Mathematical Institute}
 }
\email{kulraghav@gmail.com}

\begin{abstract}

Recent technological advancements have led to the generation of huge amounts of data over the web, such as text, image, audio and video. Needless to say, most of this data is high dimensional and sparse, consider, for instance, the bag-of-words representation used for representing text. Often, an efficient search for similar data points needs to be performed  in many applications like clustering, nearest neighbour search, ranking and indexing. Even though there have been significant increases in computational power, a simple brute-force similarity-search on such datasets is inefficient and at times impossible. Thus, it is desirable to get a compressed representation which preserves the similarity between data points.  In this work, we consider the data points as sets and use Jaccard similarity as the similarity measure. Compression techniques are generally evaluated on the   following parameters --1) Randomness required for compression, 2) Time required for compression, 3) Dimension of the data after compression,  and 4) Space required to store the compressed data. 
Ideally, the compressed representation of the data should be such, that the similarity between each pair of data points is preserved, 
while keeping the time and the randomness required for compression as low as possible. 

Recently, Pratap and Kulkarni~\cite{KulkarniP16}, suggested a compression technique for compressing high dimensional, sparse, binary data while preserving the Inner product and Hamming distance between each pair of data points.  In this work, we show that their compression technique also works well for  Jaccard similarity. 
We present a theoretical proof of  the same  and complement  it with rigorous experimentations on synthetic as well as real-world datasets.
We also compare our results with the state-of-the-art "min-wise independent  permutation",  and show that   
our compression algorithm achieves almost equal accuracy while significantly reducing the compression time and the randomness. 
Moreover, after compression our compressed representation is in binary form as opposed to integer in case of min-wise permutation, which leads to a significant reduction in search-time on the compressed data.

\end{abstract}

%
%
\begin{CCSXML}
<ccs2012>
 <concept>
  <concept_id>10010520.10010553.10010562</concept_id>
  <concept_desc>Computer systems organization~Embedded systems</concept_desc>
  <concept_significance>500</concept_significance>
 </concept>
 <concept>
  <concept_id>10010520.10010575.10010755</concept_id>
  <concept_desc>Computer systems organization~Redundancy</concept_desc>
  <concept_significance>300</concept_significance>
 </concept>
 <concept>
  <concept_id>10010520.10010553.10010554</concept_id>
  <concept_desc>Computer systems organization~Robotics</concept_desc>
  <concept_significance>100</concept_significance>
 </concept>
 <concept>
  <concept_id>10003033.10003083.10003095</concept_id>
  <concept_desc>Networks~Network reliability</concept_desc>
  <concept_significance>100</concept_significance>
 </concept>
</ccs2012>  
\end{CCSXML}


\keywords{Minhash, Jaccard Similarity, Data Compression}

\maketitle

\section{Introduction}
 We are at the dawn of a new age. An age in which the availability of raw computational power and massive data sets gives machines the ability to learn, leading to the first practical applications of Artificial Intelligence. 
 The human race has generated more amount of data in the last $2$ years than in the last couple of decades, and it seems like  just the beginning. 
As we can see, practically everything we use on a daily basis generates enormous amounts of data and in order to build smarter, more personalised products, it is required   to analyse these datasets and draw logical conclusions from it. Therefore, performing computations on big data is inevitable,  
and efficient   algorithms that are able to deal with large amounts of data, are the need of the day. 

We would like to emphasize that most of these datasets are high dimensional and sparse -- the number of possible attributes in the dataset are large however only a small number of them are present in most of the data points. For example: 
   micro-blogging site Twitter can have  each tweet  of maximum $140$ characters.  If we consider only English tweets, considering   the vocabulary size is of  $171,476$ words, each tweet can be represented  as a sparse binary vector  in $171,476$ dimension, where $1$ indicates that a word is present, $0$ otherwise.  Also, large variety of short and irregular forms in tweets  add further sparseness.
Sparsity is also quite common in web documents, text, audio, video data as well.

Therefore, it is desirable to  investigate  the compression techniques that can compress the dimension of the data while preserving the similarity between data objects. 
In this work, we focus on sparse, binary data, which can also be considered as sets, and the underlying similarity measure as Jaccard similarity. Given two sets $A$ and $B$ the Jaccard similarity between them is denoted as $\JS(A, B)$ and is defined as $\JS(A, B)={|A\cap B|}/{|A\cup B|}$. 
Jaccard Similarity is popularly used to determine whether two documents are similar.
\citep{setcontainment} showed that this problem can be reduced to set intersection problem via \textit{shingling \footnote{A document is a string of characters. A $k$-shingle for a document is defined as a contiguous substring of length $k$ found within the document. For example: if our document is ${abcd}$, then shingles of size $2$ are $\{ab, bc, cd\}$.}}. For example: two documents $A$ and $B$ first get converted into two shingles $S_A$ and $S_B$, then similarity between these two documents is defined as 
$\JS(A, B)=|S_A \cap S_B|/|S_A \cup S_B|$. Experiments validate that high Jaccard similarity implies that two documents are similar. 
 
 Broder 
\textit{et al.}~\cite{Broder00,BroderCFM98} suggested a 
technique to compress a collection of sets while preserving the  Jaccard similarity between every pair of sets.
For a set $\mathrm{U}$ of binary vectors $\{\mathbf{u_i}\}_{i=1}^n\subseteq \{0, 1\}^d$, their
 technique includes taking a random permutation 
of $\{1, 2, \ldots, d\}$ and assigning a value to each set which maps to  
minimum  under that permutation. 
\begin{definition}[Minhash~\cite{Broder00,BroderCFM98}]\label{defn:minwise}
    Let $\pi$ be a permutations over $\{1, \ldots, d\}$, then for a set $\mathbf{u}\subseteq \{1,\ldots d\}$
    $h_\pi(\mathbf{u}) = \arg\min_i \pi(i)$ for $i \in \mathbf{u}$. Then,
    \begin{align*}\label{eq:cosine}
 \Pr[h_\pi(\mathbf{u})=h_\pi(\mathbf{v})]=\frac{|\mathbf{u}\cap \mathbf{v}|}{|\mathbf{u} \cup \mathbf{v}|}.
\end{align*}
\end{definition}

\begin{note}[Representing sets as binary vectors]
Throughout this paper, for convenience of notation we represent sets as binary vectors. Let the cardinality of the universal set is $d$, then each set which is a subset of the universal set is represented as a binary vector in $d$-dimension. We mark $1$ at position where the corresponding element from universal set is present, and $0$ otherwise. We illustrate this with an example:  let the universal set is $\mathcal{U}=\{1, 2, 3, 4, 5 \}$, then we represent the set $\{1, 2\}$ as $ 11000$, and the  set $\{1,5\}$  as $100001$.
\end{note}

 \subsection{Revisiting Compression Scheme of~\cite{KulkarniP16}}
 Recently, Pratap and Kulkarni~\cite{KulkarniP16} suggested a compression scheme for binary data that compress the data while preserving both Hamming distance and Inner product. A major advantage of their scheme is that the compression-length depends only on the sparsity of the data and is independent of the dimension of data.  In the following we briefly discuss  their compression scheme:

Consider a set of $n$ binary vectors in $d$-dimensional space, then,  
given a binary vector $\textbf{u}\in \{0,1\}^{d}$, their scheme compress it into a 
$\N$-dimensional binary vector (say) $\mathbf{u'}\in\{0,1\}^{\N}$ as follows, where $\N$ to be specified later. 
It randomly assign each bit position (say) $\{i\}_{i=1}^d$ of the original
data to an integer $\{j\}_{j=1}^{\N}$. Further, to compute the $j$-th bit of the compressed vector $\mathbf{u'}$ 
we check which bits positions have been mapped to $j$,  and compute 
the parity of bits located at those positions, and assign it to the $j$-th bit  position. 
Figure~\ref{fig:bcs} illustrate this with an  example, and the definition below state is more formally. 
In continuation of their analogy we call it as $\BCS.$
 \begin{figure}[ht!]
\centering
\includegraphics[scale=.03]{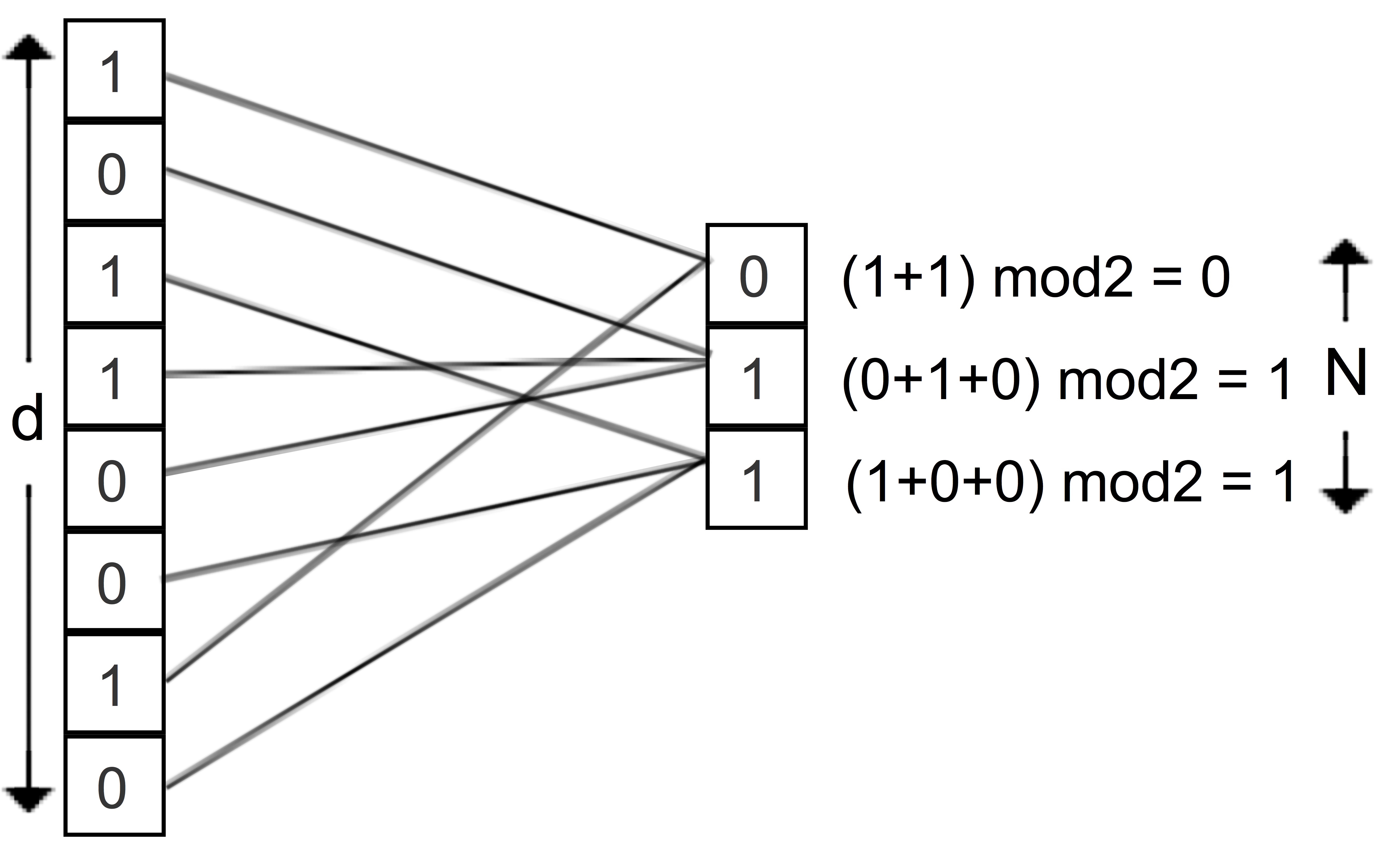}
\caption{Binary Compression Scheme ($\BCS$) of~\citep{KulkarniP16}}
\label{fig:bcs}
\end{figure}

\begin{definition}[Binary Compression Scheme -- $\BCS$ (Definition $3.1$ of~\cite{KulkarniP16}) ]\label{defi:bcs}
 Let $\N$ be the number of buckets (compression length), for $i=1$ to $d$, we randomly assign 
 the $i$-th  position to a bucket number $b(i)$  $\in \{1, \ldots \N\}$. Then a vector $\mathbf{u} \in \{0, 1\}^d$,  
 compressed into a vector $\mathbf{u}' \in \{0, 1\}^{\N}$ as follows:
 \[\mathbf{u}'[j] = \sum_{i : b(i) = j} \mathbf{u}[i]  \pmod 2.\]  
\end{definition}

 \subsection{Our Result}
 Using the above mentioned  compression scheme, we are able to prove the following compression guarantee for Jaccard similarity. 
 
\begin{theorem}\label{theorem:compressionJS}
Consider a set $\mathrm{U}$ of binary vectors $\{\mathbf{u_i}\}_{i=1}^n\subseteq \{0, 1\}^d$ with maximum number of $1$ in any vector is at most $\tb$,
 a positive  integer $r$,    $\epsilon\in (0, 1)$, and $\varepsilon \geq \max \{ \epsilon, 2\epsilon/1-\epsilon  \}$.   
    We set $\N=O({\tb}^2\log^2n) $,  and compress them into 
    a set  $\mathrm{U'}$ of binary vectors
    $\{\mathbf{u_i'}\}_{i=1}^n\subseteq\{0, 1\}^{\N}$  \textit{via} $\BCS$.   
    Then for all $\mathbf{u_i}, \mathbf{u_j}\in \mathrm{U}$ the following holds with probability
 at least $1-{2}/{n}$,
 \[
  (1-\varepsilon)\JS(\mathbf{u_i}, \mathbf{u_j})   \leq  \JS({\mathbf{u_i}}', {\mathbf{u_j}}')  \leq (1+\varepsilon) \JS(\mathbf{u_i}, \mathbf{u_j}).
 \]
 \end{theorem}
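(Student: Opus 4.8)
\medskip
\noindent\textbf{Proof proposal.} The plan is to reduce the statement about Jaccard similarity to the preservation guarantees that $\BCS$ already enjoys for the inner product and the Hamming distance, and then to propagate the resulting multiplicative errors through the quotient defining $\JS$. Writing $\JS(\mathbf{u_i},\mathbf{u_j}) = |\mathbf{u_i}\cap\mathbf{u_j}|/|\mathbf{u_i}\cup\mathbf{u_j}|$ and using the same identity for the compressed vectors, it suffices to show that, with the stated probability, the compressed intersection and the compressed union are each within a $(1\pm\epsilon)$ factor of their original values. First I would record that $\BCS$ is linear over $\mathrm{GF}(2)$, so that $\mathbf{u_i}'\oplus\mathbf{u_j}' = (\mathbf{u_i}\oplus\mathbf{u_j})'$; consequently every pairwise quantity of the compressed vectors (weight, intersection, union, $\dH$) is controlled by how the at most $2\tb$ coordinates in the support of $\mathbf{u_i}\cup\mathbf{u_j}$ are distributed among the $\N$ buckets.

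The technical core is a single-vector weight estimate: for a fixed $\mathbf{w}$ with $m$ ones, the compressed weight $|\mathbf{w}'|$ equals the number of buckets receiving an odd number of the $m$ marked coordinates, so that $\mathbb{E}\,|\mathbf{w}'| = \tfrac{\N}{2}\bigl(1-(1-2/\N)^m\bigr) = m\bigl(1-O(m/\N)\bigr)$. With $\N=O(\tb^2\log^2 n)$ and $m\le\tb$ the multiplicative bias is $O(1/\log^2 n)$, hence negligible. For concentration I would use bounded differences: re-hashing one coordinate flips the parity of at most two buckets, so $|\mathbf{w}'|$ has difference constant $2$ over the $m$ independent bucket choices, giving $\Pr\bigl[\,\bigl||\mathbf{w}'|-\mathbb{E}|\mathbf{w}'|\bigr|>t\,\bigr]\le 2\exp(-t^2/2m)$. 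Choosing $t=\Theta(\sqrt{m\log n})$ drives each deviation probability below $n^{-3}$; a union bound over the $n$ original vectors and the $\binom{n}{2}$ pairwise sums $\mathbf{u_i}\oplus\mathbf{u_j}$ (which furnish the compressed Hamming distances) keeps the total failure probability at most $2/n$. Applying this to $\mathbf{u_i}$, $\mathbf{u_j}$ and $\mathbf{u_i}\oplus\mathbf{u_j}$ yields $|\mathbf{u_i}'|\in(1\pm\epsilon)|\mathbf{u_i}|$, $|\mathbf{u_j}'|\in(1\pm\epsilon)|\mathbf{u_j}|$ and $\dH(\mathbf{u_i}',\mathbf{u_j}')=|(\mathbf{u_i}\oplus\mathbf{u_j})'|\in(1\pm\epsilon)\dH(\mathbf{u_i},\mathbf{u_j})$. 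Since $|\mathbf{u_i}\cup\mathbf{u_j}| = \tfrac12\bigl(|\mathbf{u_i}|+|\mathbf{u_j}|+\dH(\mathbf{u_i},\mathbf{u_j})\bigr)$ is a positive combination of these three quantities, the compressed union is preserved to within $(1\pm\epsilon)$ as well.

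With the numerator $|\mathbf{u_i}'\cap\mathbf{u_j}'|\in(1\pm\epsilon)\IP(\mathbf{u_i},\mathbf{u_j})$ supplied by the inner-product analysis of~\cite{KulkarniP16} and the denominator $|\mathbf{u_i}'\cup\mathbf{u_j}'|\in(1\pm\epsilon)|\mathbf{u_i}\cup\mathbf{u_j}|$ from the previous step, the quotient is controlled by the elementary fact that if a numerator and a positive denominator are each preserved up to a factor $(1\pm\epsilon)$, their ratio is preserved up to $(1\pm 2\epsilon/(1-\epsilon))$; together with the trivial $(1\pm\epsilon)$ bound this is exactly why the statement takes $\varepsilon\ge\max\{\epsilon,\,2\epsilon/(1-\epsilon)\}$. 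Substituting gives $(1-\varepsilon)\JS(\mathbf{u_i},\mathbf{u_j})\le\JS(\mathbf{u_i}',\mathbf{u_j}')\le(1+\varepsilon)\JS(\mathbf{u_i},\mathbf{u_j})$.

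I expect the main obstacle to be obtaining a \emph{multiplicative} guarantee for the intersection, which is the delicate half of the argument. The difficulty is that a collision between a coordinate of $\mathbf{u_i}$ alone and a coordinate of $\mathbf{u_j}$ alone creates a bucket that is $1$ in both compressed vectors, i.e.\ a \emph{spurious} contribution to $|\mathbf{u_i}'\cap\mathbf{u_j}'|$; keeping the number of such collisions below $\epsilon\,\IP(\mathbf{u_i},\mathbf{u_j})$ is demanding precisely when the true intersection is small, and it cannot be obtained by subtracting the compressed Hamming distance from the compressed weights, since the identity $\IP=\tfrac12(|\mathbf{u_i}|+|\mathbf{u_j}|-\dH)$ involves a difference and so yields only additive preservation, which is useless for small $\JS$. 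Controlling these cross-collisions directly is what forces the birthday-type factor $\tb^2$ in $\N$, while the second logarithmic factor is what the deviation bound needs in order to survive the union bound over all pairs; making this uniform over all $\binom{n}{2}$ pairs simultaneously with only $O(\tb^2\log^2 n)$ buckets is the crux of the proof.
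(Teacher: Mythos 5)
Your high-level architecture is the same as the paper's: take the multiplicative inner-product guarantee of~\cite{KulkarniP16} for the numerator, control the denominator separately, and push both through the quotient, which is exactly where the condition $\varepsilon\geq\max\{\epsilon,2\epsilon/(1-\epsilon)\}$ enters. The paper does this via the identity $\JS(\mathbf{u},\mathbf{v})=\langle\mathbf{u},\mathbf{v}\rangle/\left(\langle\mathbf{u},\mathbf{v}\rangle+\dH(\mathbf{u},\mathbf{v})\right)$ (Proposition~\ref{prop:JS}) together with Theorem~\ref{theorem:compressionHammingIP}, and its final algebra is the same $\left(1+\frac{2\epsilon}{1-\epsilon}\right)$ computation you describe. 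Your ratio lemma is correct, and your closing observation --- that the intersection cannot be recovered by subtracting compressed quantities because differences only give additive error --- is a genuine insight that explains why both you and the paper must lean on the prior inner-product theorem.

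The gap is in your self-contained treatment of the denominator. The paper proves no new concentration at all: it gets the denominator from the same black box, combining the deterministic fact $\dH(\mathbf{u_i'},\mathbf{u_j'})\leq\dH(\mathbf{u_i},\mathbf{u_j})$ (compression never creates a $1$) with the multiplicative lower bound on the compressed Hamming distance that it reads off from~\cite{KulkarniP16}. You instead derive $|\mathbf{w}'|\in(1\pm\epsilon)|\mathbf{w}|$ from an expectation formula plus McDiarmid, and this step fails in precisely the regime the paper targets. To survive the union bound over $\binom{n}{2}$ pairs you must take $t=\Theta(\sqrt{m\log n})$, and that additive deviation is a relative $(1\pm\epsilon)$ error only when $m=\Omega(\log n/\epsilon^2)$; here the data is sparse, $m\leq 2\tb$, $\tb$ may be far smaller than $\log n$, and even for large $\tb$ individual pairs can have tiny weights or tiny Hamming distance. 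In fact the intermediate claim you need is false as stated: take $n$ pairwise-disjoint weight-one vectors, so $\tb=1$ and $\N=O(\log^2 n)$; the expected number of pairs whose two active coordinates share a bucket is $\binom{n}{2}/\N=\Theta(n^2/\log^2 n)$, and each such collision sends the Hamming distance from $2$ to $0$ and halves the union, so no uniform $(1\pm\epsilon)$ denominator guarantee holding with probability $1-2/n$ can come out of any concentration argument at this compression length. The ingredient you are missing is how~\cite{KulkarniP16} handles small distances --- the bit-replication trick that scales every quantity by $3\log n$ before compressing (the second case of Lemma~\ref{lem:compressionBound}) --- and that mechanism is exactly what is packaged inside Theorem~\ref{theorem:compressionHammingIP}. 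The repair is to do what the paper does: obtain \emph{both} the inner-product and the Hamming-distance bounds from that theorem (equivalently, write the denominator as $\langle\mathbf{u_i'},\mathbf{u_j'}\rangle+\dH(\mathbf{u_i'},\mathbf{u_j'})$ rather than as a sum of weights), and keep only your quotient algebra.
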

 
 \begin{remark}
 A major benefit (as also mentioned in~\cite{KulkarniP16}) of $\BCS$ is that it also works well in the streaming setting. The only prerequisite is an upper bound on the sparsity $\tb$ as well as on the number of data points, which requires to give a bound on the compression length $\N$. 
 \end{remark}

\subsection*{Parameters for evaluating a compression scheme}
The quality of a compression algorithm can be evaluated on the following parameters. 
\begin{itemize}
\item \textit{Randomness} is the  number of random bits  required for   compression.
\item \textit{Compression time} is the   time required for compression.
\item \textit{Compression length} is the  dimension of   data after  compression. 
\item The amount of \textit{space} required to store the compressed matrix. 
\end{itemize}
Ideally  the compression length and the compression time should be as small as possible while keeping the accuracy as high as possible.

 \subsection{Comparison between $\BCS$ and minhash}
 We evaluate the quality of our compression scheme with minhash on the parameters stated earlier. 
 \paragraph{Randomness} One of the major advantages of BCS is the reduction in the number of random bits required for compression. 
 We quantify it  below.
 \begin{lemma}
  Let  a set of $n$ $d$ dimensional binary vectors, which get compressed into a set of $n$ vectors in $\N$ dimension \textit{via} minhash and $\BCS$, respectively. Then, the amount of random bits required for  $\BCS$ and minhash are    $O(d \log \N)$ and $O(\N d \log d)$, respectively.
 \end{lemma}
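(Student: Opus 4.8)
The plan is to count, for each scheme separately, the number of independent fair coin flips needed to sample the random objects that the scheme depends on, and then to simplify those counts using standard encoding bounds. The two halves of the lemma are independent, so I would treat them one at a time.

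For $\BCS$, the only source of randomness is the bucket-assignment map $b : \{1, \ldots, d\} \to \{1, \ldots, \N\}$ from Definition~\ref{defi:bcs}. First I would observe that once $b$ is fixed, the compression of every vector in $\mathrm{U}$ is deterministic: by the definition it is merely a parity computation over the positions mapped into each bucket, so no further randomness is consumed during compression. Each value $b(i)$ is an element of a set of size $\N$ and hence can be drawn using $\lceil \log_2 \N \rceil$ random bits, and the $d$ assignments are independent. Summing over the $d$ positions gives $d\lceil \log_2 \N \rceil = O(d \log \N)$ random bits, as claimed.

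For minhash, I would recall from Definition~\ref{defn:minwise} that each compressed coordinate is produced by a single hash function $h_\pi$, which is fully determined by a permutation $\pi$ of $\{1, \ldots, d\}$; to obtain an $\N$-dimensional compressed vector one samples $\N$ independent such permutations. The key step is to count the bits needed to specify one uniformly random permutation: there are $d!$ of them, so encoding a single permutation requires $\lceil \log_2(d!) \rceil$ bits. Applying Stirling's approximation gives $\log_2(d!) = \Theta(d \log d)$, so each permutation costs $O(d \log d)$ bits; multiplying by the $\N$ independent permutations yields $O(\N d \log d)$ random bits in total. Dividing the two counts makes the advantage of $\BCS$ transparent.

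The main difficulty I anticipate is not in the arithmetic but in the modeling assumption for minhash. The $\Theta(d\log d)$-per-permutation bound presumes \emph{ideal} (truly random, and therefore min-wise independent) permutations, which is precisely the object whose collision guarantee is invoked in Definition~\ref{defn:minwise}. I would state this assumption explicitly, since approximately min-wise independent or pseudorandom families can in principle be specified with fewer bits; the lemma, however, concerns the idealized scheme against which the accuracy of $\BCS$ is being compared, so the truly-random count is the correct one to report here.
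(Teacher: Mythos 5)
Your proposal is correct and follows essentially the same route as the paper: $d$ independent bucket assignments at $O(\log \N)$ bits each for $\BCS$, and $\N$ independent random permutations at $O(d\log d)$ bits each for minhash. The only cosmetic difference is that you count a permutation via $\lceil \log_2(d!)\rceil = \Theta(d\log d)$ (Stirling), while the paper counts it as $d$ random draws of $O(\log d)$ bits each; these are the same bound, and your added caveat about idealized versus pseudorandom permutations is a reasonable clarification rather than a departure.
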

    
 \begin{proof}
For $\BCS$, it is required to map each bit position from $d$-dimension to $\N$-dimension. One bit assignment requires 
 $O(\log \N)$ amount of randomness as it needs to generate a number between $1$ to $\N$ which require $O(\log \N)$ bits. 
 Thus, for each   bit position in $d$-dimension, the mapping requires $O(d\log \N)$ amount of randomness.  On the other side, minhash   requires creating $\N$ permutations in $d$-dimension. One permutation in $d$ dimension requires generating $d$ random numbers each within $1$ and $d$. Generating a number between $1$ and $d$ requires $O(\log d)$ random bits, and  generating $d$ such numbers require $O(d \log d)$ random bits. Thus,  generating $\N$ such random permutations requires $O(\N d \log d)$ random bits. 
 
\end{proof}
\paragraph{Compression time} 
$\BCS$ is significantly faster than Minhash algorithm in terms of compression time. This is because, generation of random bits requires a considerable amount of time. Thus, reduction in compression time is proportional to the reduction in the amount of randomness required for compression. Also, for compression length $\N$, Minhash scans the vector $\N$ times - once for each permutation, while $\BCS$ just requires a single scan.


\paragraph{Space required for compressed data:} Minhash compression generates an integer matrix as opposed to the binary matrix generated by $\BCS$. Therefore, the space required to store the compressed data of $\BCS$ is $O(\log d)$ times less as compared to minhash.

\paragraph{Search time} Binary form of our compressed data leads to a significantly faster search as efficient bitwise operations can be used.

In Section~\ref{sec:experiments}, we numerically quantify the advantages of our compression on the later three parameters \textit{via} experimentations on synthetic and real-world datasets. 

Li \textit{et. al.}~\cite{Libit} presented "b-bit minhash"  an improvement over  Broder's minhash by reducing the compression size. They   store  only a vector of $b$-bit hash values (of binary representation) of the corresponding integer hash value.  However, this approach introduces some error in the accuracy. If we compare $\BCS$ with $b$-bit minhash, then we have same the advantage as of minhash in savings of randomness and compression time. Our search time is again better as we only store one bit instead of $b$-bits.

\subsection{Applications of our result}\label{sec:application}

In cases of high dimensional, sparse data, $\BCS$ can be used to improve numerous applications where currently minhash   is used.

\paragraph{Faster ranking/ de-duplication of documents} Given a corpus of documents and a set of query documents, ranking documents from the corpus based on similarity with the query documents is an important problem in information-retrieval. This also helps in identifying duplicates, as documents that are ranked high with respect to the query documents, share high similarity. 
Broder~\cite{BroderCPM00} suggested an efficient de-duplication technique for documents -- by converting documents to \textit{shingles };  defining the similarity of two documents based on their Jaccard similarity; and then using minhash sketch to efficiently detect near-duplicates. As most the datasets are  sparse, $\BCS$ can be more effective than minhash on the parameters stated earlier.  
 
 \paragraph{Scalable Clustering of documents:}  Clustering    is   one of the fundamental   information-retrieval problems. \cite{broder2000method} suggested an approach to cluster data objects that are similar. The approach is to partition the data into shingles; defining the similarity of two documents based on their Jaccard similarity; and then via minhash generate a sketch of each data object. These sketches preserve the similarity of data objects.  Thus, grouping these sketches  gives a clustering on the original documents. However, when documents are high dimensional such as webpages, minhash sketching approach  might not be efficient. 
 Here also, exploiting the sparsity of documents $\BCS$  can be more effective  than minhash. 
 
 Beyond above applications, minhash compression has been widely used in applications like spam detection~\cite{broder1997resemblance}, compressing social networks~\cite{ChierichettiKLMPR09}, all pair similarity~\cite{BayardoMS07}. 
 As in most of these cases, data objects are sparse, $\BCS$  can provide almost accurate and more efficient solutions to these problems.  
 
 We experimentally validate the performance of $\BCS$ for raking experiments on UCI~\citep{UCI} "BoW" dataset and achieved significant improvements  over minhash. We discuss this in Subsection~\ref{subsection:realworld}. Similarly, other mentioned results can also be validated. 


\paragraph {Organization of the paper}
Below, we first present some necessary notations that 
are used in the paper. 
In Section~\ref{sec:analysis}, we first  revisit the results of~\cite{KulkarniP16}, then building on it we give a proof on the compression bound for Jaccard similarity.  In Section~\ref{sec:experiments}, we complement our theoretical results \textit{via} extensive experimentation on synthetic as well as real-world datasets. 
Finally, in Section~\ref{sec:conclusion} we conclude our discussion and state some open questions.
 \begin{tabular}{|c|l|}
\hline
 \multicolumn{2}{|c|}{\bf Notations}\\
 \hline
 $\N$ & dimension of  the compressed data \\
 \hline
$\tb$ & upper bound on the number of $1$'s in binary data.\\
\hline
$\mathbf{u}[i]$ & $i$-th bit position  of   vector $\mathbf{u}.$\\
\hline
$\JS(\mathbf{u}, \mathbf{v})$ & Jaccard similarity between   binary vectors $\mathbf{u}$ and $\mathbf{v}.$\\
\hline
 $\dH(\mathbf{u}, \mathbf{v})$& Hamming distance between binary  vectors $\mathbf{u}$ and $\mathbf{v}.$\\
 \hline
$\langle\mathbf{u}, \mathbf{v}\rangle$ & Inner product between binary vectors $\mathbf{u}$ and $\mathbf{v}.$\\
\hline
 \end{tabular}

\section{Analysis}\label{sec:analysis}
We first revisit  the  results of~\cite{KulkarniP16} which discuss compression bounds for  Hamming distance and Inner product, and then building on it, we give a compression   bound for  Jaccard similarity. We  start with discussing the intuition and a proof sketch of their result.

Consider  two binary vectors $\mathbf{u}, \mathbf{v} \in\{0, 1\}^d$, we call a bit position 
 \textit{``active''} if at least one of the vector between $\mathbf{u}$ and $\mathbf{v}$ has value $1$ in that  position. Further, given the sparsity bound $\tb$, there can be at most  
 $2\tb$ active positions   between  $\mathbf{u}$ and $\mathbf{v}$. Then let via $\BCS$, they  
compressed into binary vectors $\mathbf{u'}, \mathbf{v'} \in \{0, 1\}^{\N}$.
In the compressed version, we call a bit position \textit{``pure''} if the number of active positions mapped to it is at most one, and   \textit{"corrupted"} otherwise. 
The contribution of pure bit 
positions in $\mathbf{u'}, \mathbf{v'}$ towards Hamming distance (or Inner product similarity), is exactly equal to the 
contribution of the bit positions in $\mathbf{u}, \mathbf{v}$ which get mapped to the pure bit positions. 
 \begin{figure}[h!]
\centering
\includegraphics[scale=.025]{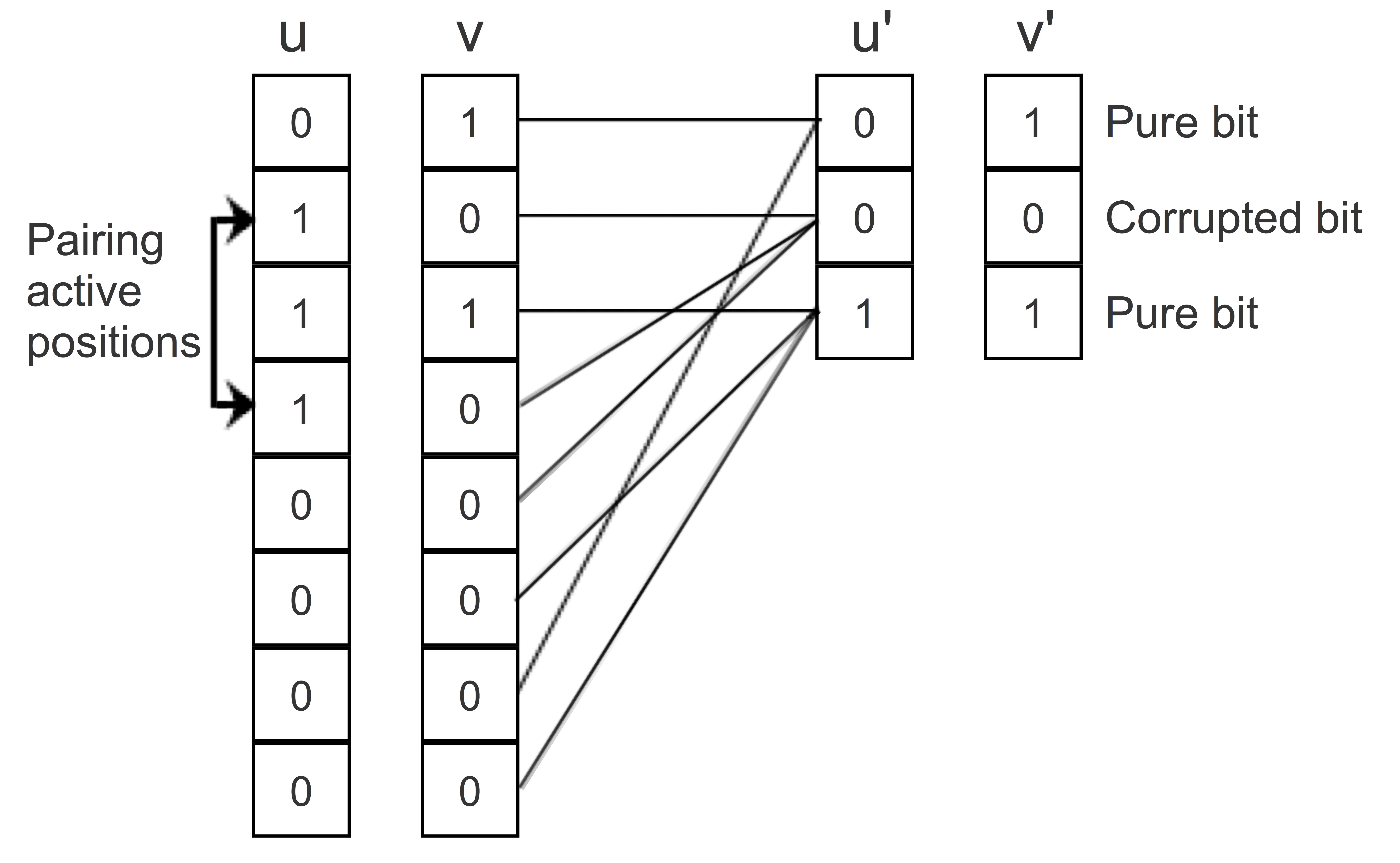}
\caption{Illustration of pure/corrupted bits in $\BCS$.}
\label{fig:pure}
\end{figure}
Further, the  deviation of Hamming distance (or Inner product similarity) between $\mathbf{u'}$ and $\mathbf{v'}$ from that 
of $\mathbf{u}$ and $\mathbf{v}$, corresponds to the number of corrupted bit positions shared between 
$\mathbf{u'}$ and $\mathbf{v'}$. Figure~\ref{fig:pure} illustrate this with an example, and the lemma below analyse it.
 \begin{lem}[Lemma $14$ of ~\cite{KulkarniP16}]\label{lem:compression}
 Consider two binary vectors $\mathbf{u}, \mathbf{v} \in\{0, 1\}^d$, which get compressed into  
 vectors $\mathbf{u'}, \mathbf{v'} \in \{0, 1\}^{\N}$ 
 using the $\BCS$, and  suppose $\tb$ is the maximum number of $1$ in any vector. 
 Then for an  integer $r\geq1$, and 
  $\epsilon \in (0, 1)$, probability that $\mathbf{u'}$ and $\mathbf{v'}$ share more than $\epsilon r$ 
  corrupted positions is at most
 $\left({2\tb}/{\sqrt{\N}}\right)^{\epsilon r}.$
\end{lem}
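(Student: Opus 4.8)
The plan is to reduce the event ``$\mathbf{u'}$ and $\mathbf{v'}$ share more than $\epsilon r$ corrupted positions'' to the existence of many simultaneous collisions among the active positions, and then control the latter by a union bound over the possible colliding pairs. First I would fix the two vectors and let $S$ denote the set of their active positions; by the sparsity bound $|S|\le 2\tb$. Recall that a bucket $j\in\{1,\ldots,\N\}$ is corrupted precisely when at least two positions of $S$ are mapped to $j$ under the random assignment $b(\cdot)$ of $\BCS$. The only randomness is the family $\{b(i)\}_{i\in S}$, which consists of independent variables, each uniform on $\{1,\ldots,\N\}$.

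The key combinatorial step I would establish is the following: if the number of corrupted buckets exceeds $\epsilon r$, then one can select $k:=\lceil \epsilon r\rceil$ of them and, in each selected bucket, a pair of active positions landing there, thereby extracting $k$ pairwise-disjoint pairs $\{a_1,b_1\},\ldots,\{a_k,b_k\}\subseteq S$ with $b(a_\ell)=b(b_\ell)$ for every $\ell$. Disjointness is automatic, since each position is mapped to a single bucket and the selected buckets are distinct. Consequently the target event is contained in the union, over all families of $k$ disjoint pairs drawn from $S$, of the events ``all $k$ of these pairs collide.''

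Next I would bound the two ingredients of the union bound. For a fixed family of $k$ disjoint pairs, the collision events $\{b(a_\ell)=b(b_\ell)\}$ depend on disjoint blocks of the independent variables $\{b(i)\}_{i\in S}$, hence are mutually independent; each has probability $1/\N$, so their conjunction has probability exactly $\N^{-k}$. The number of families of $k$ disjoint pairs from a set of size at most $2\tb$ is at most $\binom{2\tb}{2}^{k}\le (2\tb^2)^{k}$, where I am deliberately discarding the savings from disjointness and from unordering, since a crude count already suffices. A union bound then gives that the target event has probability at most $(2\tb^2/\N)^{k}\le(2\tb^2/\N)^{\epsilon r}$, the last inequality using $k\ge\epsilon r$ together with $2\tb^2/\N\le 1$.

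Finally I would rewrite this in the stated form. In the operating regime $\N\ge \tb^2$, which is guaranteed by the choice $\N=O(\tb^2\log^2 n)$ in Theorem~\ref{theorem:compressionJS}, one has $2\tb^2/\N\le 2\tb/\sqrt{\N}$, so the bound simplifies to $(2\tb/\sqrt{\N})^{\epsilon r}$, as claimed. I expect the main obstacle to lie in the combinatorial reduction of the second paragraph: one must verify carefully that more than $\epsilon r$ corrupted buckets genuinely furnish $\epsilon r$ \emph{disjoint} colliding pairs, because it is exactly this disjointness that makes the chosen collision events independent and pins the per-family probability at $\N^{-k}$. By comparison, the enumeration of families and the closing simplification from $2\tb^2/\N$ to $2\tb/\sqrt{\N}$ are routine.
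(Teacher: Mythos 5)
Your proposal is correct and takes essentially the same route as the paper: both arguments reduce the event to the existence of many disjoint colliding pairs among the at most $2\tb$ active positions, use the fact that disjoint pairs collide independently with probability $1/\N$ each, and finish with a union bound over the choices of pairs. The only differences are bookkeeping — the paper extracts $\epsilon r/2$ pairs and bounds the number of families by $(2\tb)^{\epsilon r}$, which yields $(2\tb/\sqrt{\N})^{\epsilon r}$ unconditionally, whereas your count $(2\tb^2/\N)^{\lceil\epsilon r\rceil}$ needs $\N\geq 2\tb^2$ to finish; this is harmless, since whenever $\N<4\tb^2$ the claimed bound exceeds $1$ and the lemma is trivial.
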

\begin{proof}
We first give a bound on the probability that a particular bit position gets corrupted between $\mathbf{u'}$ and $\mathbf{v'}$. 
  As there are at most $2\tb$ active positions shared between vectors $\mathbf{u}$ and $\mathbf{v}$, the number of ways of
 pairing two active positions from $2\tb$ active positions is  at most $2\tb \choose 2$, and this 
 pairing will result in a corrupted bit position in $\mathbf{u'}$ or $\mathbf{v'}$. 
 Then, the  probability that a particular bit position in 
 $\mathbf{u'}$ or $\mathbf{v'}$ gets corrupted is at most 
 ${{2\tb \choose 2}}/{{\N}}\leq \left({4{\tb}^2}/{\N} \right).$
 Further, if the deviation of Hamming distance (or Inner product similarity) between $\mathbf{u'}$ and $\mathbf{v'}$ 
 from that of  $\mathbf{u}$ and $\mathbf{v}$ is more than $\epsilon r$, then at least $\epsilon r$ corrupted 
 positions are shared between $\mathbf{u'}$ and $\mathbf{v'}$, 
 which implies that at least $\frac{\epsilon r}{2}$ pair of active positions in $\mathbf{u}$ and $\mathbf{v}$
 got paired up while compression. 
 The number of possible ways of pairing $\frac{\epsilon r}{2}$ active positions from $2\tb$ active positions 
 is at most ${2\tb \choose \frac{\epsilon r}{2}}{2\tb-\frac{\epsilon r}{2} \choose \frac{\epsilon r}{2}} \frac{\epsilon r}{2}!\leq
 (2\tb)^{\epsilon r}.$ 
 Since the probability that a pair of active positions got mapped in the same bit position in the compressed data 
 is $\frac{1}{\N}$, the probability that  $\frac{\epsilon r}{2}$  pair of active positions got mapped 
 in $\frac{\epsilon r}{2}$ distinct bit positions in the compressed data is at 
 most $(\frac{1}{\N})^{\frac{\epsilon r}{2}}$. 
   Thus, by union bound, the probability that at least $\epsilon r$ corrupted bit 
 position shared between $\mathbf{u'}$ and $\mathbf{v'}$  is at most ${(2\tb)^{\epsilon r}}/({{\N}^{{\epsilon r}/{2}}})
 =\left({2\tb}/{\sqrt{\N}}\right)^{\epsilon r}.$
 \end{proof}
 \vspace{-0.4cm}
 In the lemma below   generalise the above result for a set of $n$ binary vectors, and suggest a compression bound so that 
 any pair of compressed vectors  share only a  very small number of corrupted bits, with high probability.  
 \begin{lem}[Lemma $15$ of ~\cite{KulkarniP16}]\label{lem:compressionBound}
  Consider a set~$\mathrm{U}$ of $n$  binary vectors $\{\mathbf{u_i}\}_{i=1}^n\subseteq \{0, 1\}^d$, 
  which get compressed into  a set  $\mathrm{U'}$ of binary
  vectors $\{\mathbf{u_i'}\}_{i=1}^n\subseteq\{0, 1\}^{\N}$  using the $\BCS$. Then for any 
  positive  integer $r$, and   $\epsilon \in (0, 1)$, 
 \begin{itemize}
  \item if $\epsilon r >3 \log n$, and  we set $\N=O({\tb}^2)$, then probability 
  that   for all $\mathbf{u_i'}, \mathbf{u_j'}\in \mathrm{U'}$
  share more than $\epsilon r$   corrupted positions is at most  ${1}/{n}$. 
  \item If $\epsilon r < 3 \log n$, and  we set $\N=O({\tb}^2\log^2n) $, then probability 
  that   for all $\mathbf{u_i'}, \mathbf{u_j'}\in \mathrm{U'}$
  share more than $\epsilon r$   corrupted positions is at most  ${1}/{n}$.
 \end{itemize}
 \end{lem}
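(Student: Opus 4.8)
The plan is to lift the single-pair estimate of Lemma~\ref{lem:compression} to all pairs by a union bound, and then to read off the two required bucket sizes by balancing the exponent $\epsilon r$ against the base $2\tb/\sqrt{\N}$. First I would fix an arbitrary pair $\mathbf{u_i}, \mathbf{u_j}$ and invoke Lemma~\ref{lem:compression}, which already gives that this pair shares more than $\epsilon r$ corrupted positions with probability at most $(2\tb/\sqrt{\N})^{\epsilon r}$. Since there are $\binom{n}{2} < n^2$ distinct pairs, a union bound shows that the probability that \emph{some} pair shares more than $\epsilon r$ corrupted positions is at most $n^2 (2\tb/\sqrt{\N})^{\epsilon r}$. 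Hence it suffices to choose $\N$ so that the per-pair probability drops below $n^{-3}$, i.e. so that
\[
  \left(\frac{2\tb}{\sqrt{\N}}\right)^{\epsilon r} \leq \frac{1}{n^3},
\]
because then the total failure probability is at most $n^2 \cdot n^{-3} = 1/n$, as claimed in both bullets.

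The remaining work is purely to solve this inequality for $\N$, and here the two regimes of $\epsilon r$ appear naturally. Rearranging, the condition is equivalent to $\N \geq 4\tb^2\, n^{6/(\epsilon r)}$, so everything hinges on the size of the factor $n^{6/(\epsilon r)}$. In the first regime, $\epsilon r > 3\log n$, the exponent $6/(\epsilon r)$ is smaller than $2/\log n$, so $n^{6/(\epsilon r)}$ is bounded by a constant; equivalently, once $\N = O(\tb^2)$ makes the base $2\tb/\sqrt{\N}$ a constant strictly below $1$, the large exponent $\epsilon r > 3\log n$ by itself forces $(2\tb/\sqrt{\N})^{\epsilon r}$ below $n^{-3}$. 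This yields the first bullet with $\N = O(\tb^2)$.

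In the second regime, $\epsilon r < 3\log n$, the exponent can no longer be relied upon to shrink the per-pair probability, and this is the step I expect to be the main obstacle: the whole burden must be shifted onto the base $2\tb/\sqrt{\N}$, which has to be driven down to roughly $1/\log n$. Concretely, taking $\N = O(\tb^2\log^2 n)$ makes $2\tb/\sqrt{\N} = O(1/\log n)$, and the delicate part is to check that $(2\tb/\sqrt{\N})^{\epsilon r} \leq n^{-3}$ still holds throughout this regime -- equivalently, that $n^{6/(\epsilon r)} = O(\log^2 n)$ -- which forces one to track the $\log\log n$ dependence hidden in $n^{6/(\epsilon r)}$ and to fix the constants in $\N$ accordingly. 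Once this estimate is verified, the same union bound over the $n^2$ pairs gives total failure probability at most $1/n$ and establishes the second bullet.
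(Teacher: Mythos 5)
Your first bullet is handled exactly as the paper handles it: per-pair bound from Lemma~\ref{lem:compression}, union bound over $\binom{n}{2}$ pairs, and the observation that once the base $2\tb/\sqrt{\N}$ is a constant below $1$, the exponent $\epsilon r > 3\log n$ alone pushes the per-pair probability under $n^{-3}$. But your second bullet has a genuine gap: the ``delicate check'' you defer --- that $n^{6/(\epsilon r)} = O(\log^2 n)$ --- is simply false on most of the regime $\epsilon r < 3\log n$. With $\N = c\,\tb^2\log^2 n$ the requirement $\left(2\tb/\sqrt{\N}\right)^{\epsilon r}\le n^{-3}$ unwinds to $\epsilon r \ge 3\log n/\log\left(\sqrt{c}\,\log n/2\right)$, roughly $3\log n/\log\log n$; for anything smaller --- say $\epsilon r$ a constant, which the lemma must cover, since $r$ is any positive integer and $\epsilon$ any value in $(0,1)$ --- the inequality fails for every choice of the constant $c$. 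Indeed, for $\epsilon r = 1$ your own rearrangement $\N \ge 4\tb^2 n^{6/(\epsilon r)}$ demands $\N \ge 4\tb^2 n^{6}$, polynomial in $n$ rather than polylogarithmic. The direct route of shifting the whole burden onto the base cannot work when the exponent is constant.

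The paper closes this gap with a replication trick that your proposal is missing: before compressing, each of the $d$ bit positions is duplicated $3\log n$ times. This scales the Hamming distance and inner product --- and hence the corruption threshold corresponding to a relative error of $\epsilon$ --- from $\epsilon r$ up to $3\epsilon r\log n$, while the sparsity only scales from $\tb$ to $3\tb\log n$. Applying Lemma~\ref{lem:compression} to the replicated vectors gives a per-pair failure probability of at most $\left(6\tb\log n/\sqrt{\N}\right)^{3\epsilon r\log n}$; setting $\N = 144\,\tb^2\log^2 n$ makes the base equal to $1/2$ while the exponent is now of order $\log n$, so the per-pair probability drops below $1/n^3$ and the same union bound over $\binom{n}{2}$ pairs finishes the proof. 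The moral, which is the one idea your write-up lacks, is that in the small-$\epsilon r$ regime the large exponent has to be manufactured by rescaling the problem itself; it cannot be extracted from the base by enlarging $\N$ within the $O(\tb^2\log^2 n)$ budget.
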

   \begin{proof}
 For a fixed pair of compressed 
 vectors $\mathbf{u_i'}$ and $\mathbf{u_j'}$, due to lemma~\ref{lem:compression}, probability that they
 share more than $\epsilon r$ corrupted positions is at most  $\left({2{\tb}}/{\sqrt{\N}}\right)^{\epsilon r}.$
 If $\epsilon r >3 \log n$, and  $\N=16{\tb}^2$, then the above probability is at most 
 $\left({2\tb}/{\sqrt{\N}}\right)^{\epsilon r}<\left({2{\tb}}/{4t}\right)^{3 \log n}=\left({1}/{2}\right)^{3 \log n}<{1}/{n^3}.$ 
 As there are at most 
 ${n \choose 2}$ pairs of vectors,  the required bound follows from union bound of probability. 
  
  In the second case,   as $\epsilon r <3 \log n$, we cannot  bound the  probability as above. 
 Thus, we replicate each bit position $3 \log n$ times, 
  which makes a $d$ dimensional 
 vector to a $3d\log n$  dimensional, and as a consequence the  Hamming distance 
 (or Inner product similarity) is also scaled up by a multiplicative factor of $3 \log n$.
 We now apply the compression scheme on these scaled vectors, then for a fixed pair of compressed 
 vectors $\mathbf{u_i'}$ and $\mathbf{u_j'}$,  probability that they  have more than 
 $3 \epsilon r \log n $ corrupted positions is at most    
 $\left({6\tb\log n }/{\sqrt{\N}}\right)^{3 \epsilon r\log n}$. As we set 
 $\N=144{\tb}^2\log^2n $, the above probability is at most 
 $\left({6\tb\log n }/{\sqrt{144{\tb}^2\log^2n}}\right)^{3 \epsilon r \log n}< \left({1}/{2}\right)^{3 \log n}<{1}/{n^3}.$
 The final probability follows due to union bound over all ${n \choose 2}$ pairs.
  \end{proof}
  After compressing binary data \textit{via} $\BCS$, the Hamming distance between any pair of binary vectors can not increase. This is due to the fact that  compression doesn't generate any new $1$ bit, which could increase the Hamming distance from the   uncompressed version. 
  In the following, we recall  the main result of~\citep{KulkarniP16}, which holds due the above fact and Lemma~\ref{lem:compressionBound}.
  \begin{theorem}[Theorem $1$, $2$ of~\cite{KulkarniP16}]\label{theorem:compressionHammingIP}
Consider a set $\mathrm{U}$ of binary vectors $\{\mathbf{u_i}\}_{i=1}^n\subseteq \{0, 1\}^d$, 
  a positive  integer $r$, and   $\epsilon \in (0, 1)$. 
  If $\epsilon r >3 \log n$,  we set $\N=O({\tb}^2)$;  if $\epsilon r < 3 \log n$, 
    we set $\N=O({\tb}^2\log^2n) $,  and compress them 
  into  a set  $\mathrm{U'}$ of binary vectors $\{\mathbf{u_i'}\}_{i=1}^n\subseteq\{0, 1\}^{\N}$  \textit{via} $\BCS$.
    Then for all  $\mathbf{u_i}, \mathbf{u_j}\in \mathrm{U}$, 
\begin{itemize}
 \item if $\dH(\mathbf{u_i}, \mathbf{u_j})< r$, then $\Pr [\dH({\mathbf{u_i}}', {\mathbf{u_j}}')< r]=1$,
  \item if $\dH(\mathbf{u_i}, \mathbf{u_j})\geq (1+\epsilon)r$, then $\Pr [\dH({\mathbf{u_i}}', {\mathbf{u_j}}')< r]<{1}/{n}.$
\end{itemize}
 For Inner product, if we set $\N=O({\tb}^2\log^2n) $, then the following is  true with probability
 at least $1-{1}/{n}$,
 \begin{itemize}
 \item $(1-\epsilon) \langle\mathbf{u_i}, \mathbf{u_j} \rangle  \leq \langle {\mathbf{u_i}}', {\mathbf{u_j}}'\rangle \leq (1+\epsilon)\langle\mathbf{u_i}, \mathbf{u_j}\rangle.$
 \end{itemize}
\end{theorem}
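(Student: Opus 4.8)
The plan is to prove the final statement, Theorem~\ref{theorem:compressionHammingIP}, by directly leveraging Lemma~\ref{lem:compressionBound}, which bounds the number of corrupted bit positions shared between any pair of compressed vectors. I would handle the Hamming distance claims and the inner product claim as two separate arguments, but both hinge on the same key observation already emphasized in the excerpt: on \emph{pure} bit positions, the contribution to Hamming distance (respectively inner product) of the compressed vectors exactly equals the contribution of the original bit positions mapped into them, so any discrepancy between $\dH(\mathbf{u_i},\mathbf{u_j})$ and $\dH(\mathbf{u_i'},\mathbf{u_j'})$ (or the analogous inner-product quantities) is entirely accounted for by the \emph{corrupted} positions.

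For the Hamming distance, I would first establish the deterministic monotonicity fact stated just before the theorem: since $\BCS$ computes parities and never creates a $1$ where all mapped bits are $0$, compression can only merge or cancel $1$'s, so $\dH(\mathbf{u_i'},\mathbf{u_j'}) \le \dH(\mathbf{u_i},\mathbf{u_j})$ always holds. This immediately yields the first bullet: if $\dH(\mathbf{u_i},\mathbf{u_j}) < r$ then $\dH(\mathbf{u_i'},\mathbf{u_j'}) \le \dH(\mathbf{u_i},\mathbf{u_j}) < r$ with probability $1$. For the second bullet, suppose $\dH(\mathbf{u_i},\mathbf{u_j}) \ge (1+\epsilon)r$. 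The only way the compressed distance can drop below $r$ is if at least $\epsilon r$ active positions are lost to cancellation, which (as argued in Lemma~\ref{lem:compression}) forces at least $\epsilon r$ corrupted positions to be shared between $\mathbf{u_i'}$ and $\mathbf{u_j'}$. Applying Lemma~\ref{lem:compressionBound} with the appropriate setting of $\N$ (depending on whether $\epsilon r > 3\log n$ or $\epsilon r < 3\log n$) bounds this event by $1/n$ over all pairs, giving $\Pr[\dH(\mathbf{u_i'},\mathbf{u_j'}) < r] < 1/n$.

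For the inner product, I would set $r = \langle \mathbf{u_i},\mathbf{u_j}\rangle$ and observe that the deviation $|\langle \mathbf{u_i'},\mathbf{u_j'}\rangle - \langle \mathbf{u_i},\mathbf{u_j}\rangle|$ is controlled entirely by the number of shared corrupted positions: each corrupted position can alter the inner-product contribution by at most a constant, so a deviation exceeding $\epsilon r = \epsilon \langle \mathbf{u_i},\mathbf{u_j}\rangle$ again requires more than $\epsilon r$ shared corrupted positions. Invoking Lemma~\ref{lem:compressionBound} with $\N = O(\tb^2 \log^2 n)$ (the conservative choice that covers both regimes of $\epsilon r$) and taking a union bound over all $\binom{n}{2}$ pairs yields the two-sided $(1\pm\epsilon)$ bound with probability at least $1 - 1/n$.

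The main obstacle I anticipate is making the reduction ``deviation in the similarity measure $\Rightarrow$ at least $\epsilon r$ shared corrupted positions'' fully rigorous, since this is the hinge that lets us feed the problem into Lemma~\ref{lem:compressionBound}. In particular, one must argue carefully that \emph{every} unit of discrepancy (a lost $1$ in the Hamming case, or a spurious or missing coincidence in the inner-product case) corresponds to a genuinely corrupted position shared by the two compressed vectors, and that these corrupted positions are counted in a way consistent with Lemma~\ref{lem:compression}'s pairing argument. Once that correspondence is pinned down, the probability bookkeeping is routine and follows directly from the two preceding lemmas together with the deterministic non-increase of Hamming distance.
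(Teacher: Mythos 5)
Your proposal matches the paper's own treatment: the paper proves this theorem exactly as you do, by combining the deterministic fact that $\BCS$ never creates a new $1$ bit (so $\dH(\mathbf{u_i'},\mathbf{u_j'})\leq\dH(\mathbf{u_i},\mathbf{u_j})$ always, giving the first bullet with probability $1$) with Lemma~\ref{lem:compressionBound}'s bound on shared corrupted positions for the second bullet and the inner-product claim, including your observation that the inner product needs the conservative $\N=O(\tb^2\log^2 n)$ since $\epsilon\langle\mathbf{u_i},\mathbf{u_j}\rangle$ may fall below $3\log n$. The correspondence between similarity deviation and corrupted positions that you flag as the delicate step is indeed the content of Lemma~\ref{lem:compression}, which the paper (following~\cite{KulkarniP16}) takes as already established, so your sketch is at least as detailed as the paper's.
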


  The following proposition relates Jaccard similarity with Inner product and Hamming distance. The proof follows as for a pair binary vectors their Jaccard similarity is the ratio of the number of positions where $1$ is appearing together, with the number of bit positions where $1$ is present in either of them. Clearly, numerator is captured by the Inner product  between those pair of vectors, and denominator is captured by Inner product plus Hamming distance between them -- number of positions where $1$ is occurring in both vectors, plus the number of positions where $1$ is present in exactly one of them.
  \begin{prop}\label{prop:JS}
  For any pair of vectors $\mathbf{u}, \mathbf{v} \subseteq \{0, 1\}^d$, we have 
  $
  \JS(\mathbf{u}, \mathbf{v})=\langle  \mathbf{u}, \mathbf{v} \rangle / \left(  \langle   \mathbf{u}, \mathbf{v}  \rangle   + \dH(\mathbf{u}, \mathbf{v}) \right)
  $
  \end{prop}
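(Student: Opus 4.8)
The plan is to reduce the claimed identity to an elementary coordinate-counting argument, using the correspondence between each binary vector and the set it encodes. First I would record the two standard facts that translate the right-hand side into set language. For binary vectors the inner product counts exactly the coordinates carrying a $1$ in both vectors, so $\langle \mathbf{u}, \mathbf{v}\rangle = |\mathbf{u}\cap\mathbf{v}|$; and the Hamming distance counts exactly the coordinates on which the two vectors disagree, i.e.\ those where precisely one of them is $1$, so $\dH(\mathbf{u},\mathbf{v})=|\mathbf{u}\triangle\mathbf{v}|$, the size of the symmetric difference.

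Next I would partition the index set $\{1,\ldots,d\}$ into the four classes determined by the pair $(\mathbf{u}[i],\mathbf{v}[i])\in\{0,1\}^2$ and observe that the union $\mathbf{u}\cup\mathbf{v}$ is the disjoint union of the ``both $1$'' coordinates and the ``exactly one $1$'' coordinates, while the ``both $0$'' coordinates contribute to neither set. This disjointness gives the denominator identity
\[
  |\mathbf{u}\cup\mathbf{v}| \;=\; |\mathbf{u}\cap\mathbf{v}| + |\mathbf{u}\triangle\mathbf{v}| \;=\; \langle \mathbf{u}, \mathbf{v}\rangle + \dH(\mathbf{u},\mathbf{v}).
\]
Substituting this together with $|\mathbf{u}\cap\mathbf{v}|=\langle \mathbf{u}, \mathbf{v}\rangle$ into the definition $\JS(\mathbf{u},\mathbf{v})=|\mathbf{u}\cap\mathbf{v}|/|\mathbf{u}\cup\mathbf{v}|$ yields the stated formula directly.

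There is no genuine obstacle in this argument; it is a definitional unfolding. The only point deserving a line of care is verifying that the four coordinate classes are truly disjoint and jointly exhaustive, so that the decomposition of the union as intersection-plus-symmetric-difference is an exact equality rather than an inequality. I would also flag the harmless degenerate case $\mathbf{u}=\mathbf{v}=\mathbf{0}$, where both numerator and denominator vanish and the quantity is set to the conventional value; in every other case the denominator $\langle \mathbf{u}, \mathbf{v}\rangle + \dH(\mathbf{u},\mathbf{v})=|\mathbf{u}\cup\mathbf{v}|$ is strictly positive, so the ratio is well defined.
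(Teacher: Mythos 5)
Your proof is correct and follows essentially the same route as the paper: identifying $\langle \mathbf{u}, \mathbf{v}\rangle$ with $|\mathbf{u}\cap\mathbf{v}|$, $\dH(\mathbf{u},\mathbf{v})$ with the count of positions where exactly one vector has a $1$, and decomposing $|\mathbf{u}\cup\mathbf{v}|$ as the disjoint sum of these two quantities. Your additional care about the four-class partition and the degenerate all-zeros case is a minor refinement of the same definitional unfolding.
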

  
   In the following, we complete a proof of the Theorem~\ref{theorem:compressionJS} due to Proposition~\ref{prop:JS}, and Theorem~\ref{theorem:compressionHammingIP}.
  \paragraph{\textbf{Proof of Theorem~\ref{theorem:compressionJS}}}
   Consider a pair of vectors   $\mathbf{u_i}, \mathbf{u_j}$ from the set $\mathrm{U}\in \{0, 1\}^d$, which get compressed into binary
  vectors $\mathbf{u_i'}, \mathbf{u_j'}\in \{0, 1\}^{\N}$. Due to Proposition~\ref{prop:JS}, we have 
 $
   \JS(\mathbf{u_i'}, \mathbf{u_j'})={\langle\mathbf{u_i'}, \mathbf{u_j'}   \rangle}/({\langle \mathbf{u_i'}, \mathbf{u_j'}   \rangle+\dH(\mathbf{u_i'}, \mathbf{u_j'})}).
$
Below, we present a lower and  upper bound on the expression. 
   \begin{align*}
   \JS(\mathbf{u_i'}, \mathbf{u_j'}) 
   &\geq \frac{(1-\epsilon)\langle\mathbf{u_i}, \mathbf{u_j}   \rangle}{(1-\epsilon)\langle \mathbf{u_i}, \mathbf{u_j}   \rangle+\dH(\mathbf{u_i}, \mathbf{u_j})}\numberthis\label{eq:thmresults1}\\
   &  \geq \frac{(1-\epsilon)\langle\mathbf{u_i}, \mathbf{u_j}   \rangle}{\langle \mathbf{u_i}, \mathbf{u_j}   \rangle+\dH(\mathbf{u_i}, \mathbf{u_j})}\\
   & \geq (1-\varepsilon) \JS(\mathbf{u_i}, \mathbf{u_j})\numberthis\label{eq:lowerbound}
  \end{align*}
  Equation~\ref{eq:thmresults1} holds hold true with probability at least $1-1/n $ due to Theorem~\ref{theorem:compressionHammingIP}.
  \begin{align*}
   \JS(\mathbf{u_i'}, \mathbf{u_j'})&=\frac{\langle\mathbf{u_i'}, \mathbf{u_j'}   \rangle}{\langle \mathbf{u_i'}, \mathbf{u_j'}   \rangle+\dH(\mathbf{u_i'}, \mathbf{u_j'})}\\
   &\leq \frac{(1+\epsilon)\langle\mathbf{u_i}, \mathbf{u_j}   \rangle}{(1+\epsilon)\langle \mathbf{u_i}, \mathbf{u_j}   \rangle+(1-\epsilon)\dH(\mathbf{u_i}, \mathbf{u_j})} \numberthis\label{eq:thmresults}\\
   &=\left( \frac{1+\epsilon}{1-\epsilon}\right)\frac{\langle\mathbf{u_i}, \mathbf{u_j}   \rangle}{\left(\frac{(1+\epsilon)}{(1-\epsilon)}\langle \mathbf{u_i}, \mathbf{u_j}   \rangle+\dH(\mathbf{u_i}, \mathbf{u_j})\right)}\\
    &\leq\left( \frac{1+\epsilon}{1-\epsilon}\right)\frac{\langle\mathbf{u_i}, \mathbf{u_j}   \rangle}{\langle \mathbf{u_i}, \mathbf{u_j}   \rangle+\dH(\mathbf{u_i}, \mathbf{u_j})}\\
      &=\left( 1+\frac{2\epsilon}{1-\epsilon}\right)\frac{\langle\mathbf{u_i}, \mathbf{u_j}   \rangle}{\langle \mathbf{u_i}, \mathbf{u_j}   \rangle+\dH(\mathbf{u_i}, \mathbf{u_j})}\\
        &\leq\left( 1+\varepsilon\right)\frac{\langle\mathbf{u_i}, \mathbf{u_j}   \rangle}{\langle \mathbf{u_i}, \mathbf{u_j}   \rangle+\dH(\mathbf{u_i}, \mathbf{u_j})} \numberthis\label{eq:epsbound}\\
        &=(1+\varepsilon) \JS(\mathbf{u_i}, \mathbf{u_j}) \numberthis\label{eq:upperbound}
  \end{align*}
  Equation~\ref{eq:thmresults} holds hold true with probability at least $(1-1/n)^2\geq 1-2/n $ due to Theorem~\ref{theorem:compressionHammingIP}; Equation~\ref{eq:epsbound} holds as $\varepsilon\geq \frac{2\epsilon}{1-\epsilon}.$
    Finally Equations~\ref{eq:upperbound} and \ref{eq:lowerbound} complete a proof of the Theorem.
\section{Experimental Evaluation}\label{sec:experiments}
We performed our experiments on a machine having the following configuration: 
CPU: Intel(R) Core(TM) i5 CPU @ 3.2GHz x 4; 
Memory: 8GB 1867 MHz DDR3; 
OS: macOS Sierra 10.12.5;  
OS type: 64-bit. 
We performed our experiments on synthetic and real-world datasets, we discuss them one-by-one as follows:
\subsection{Results on Synthetic Data} 
We performed two experiments on  synthetic dataset and showed that it preserves both: a) all-pair-similarity,  and b) $k$--$\NN$ similarity. 
In all-pair-similarity, given a set of $n$ binary vectors in $d$-dimensional space with the sparsity bound 
   $\tb$, we showed that after compression Jaccard similarity    between every pair of vector is   preserved. In $k$--$\NN$ similarity, given is a query vector $\Sq$, we showed that after compression Jaccard similarity  between $\Sq$ and  the  vectors  that are similar to $\Sq$,  are   preserved. We performed   experiments on  dataset consisted of $1000$ vectors in $100000$ dimension. Throughout synthetic data experiments, we calculate the accuracy \textit{via} Jaccard ratio, that is, if the set $\mathcal{O}$ denotes  the ground truth result,  and the set $\mathcal{O'}$ denotes our result,  then the accuracy of our result is calculated by  the Jaccard ratio between the sets $\mathcal{O}$ and $\mathcal{O'}$ -- that is $\JS(\mathcal{O}, \mathcal{O'})={|\mathcal{O}\cap \mathcal{O'}|}/{|\mathcal{O}\cup \mathcal{O'}|}$. To reduce the effect of randomness we repeat the experiment $10$ times and took  average. 
\subsubsection{Dataset generation}
\paragraph{All-pair-similarity} We generated
 $1000$ binary vectors in dimension  $100000$ such that the sparsity of each vector is at most $\tb$. 
  If we randomly choose binary vectors  respecting the sparsity bound, then most likely every pair of vector will have similarity $0$. Thus, we had to deliberately generate some vectors having   high  similarity. We  generated $200$ pairs whose similarity is high.  
  To generate such a pair, we  choose a random number (say $s$) between $1$ and $\tb$, then we randomly select those many position (in dimension) from $1$ to $100000$,  set $1$ in both of them, and set  remaining to $0$. Further, for each of the vector in the pair, we choose a random number (say $s'$) from the range $1$ to $ (\tb - s)$, and again randomly sample those many positions from the remaining positions and set them to $1$.  
  This gives a pair of vectors having  similarity  at least $\frac{s}{s+2s'}$  and respecting the sparsity bound. We repeat this step  $200$ times and obtain   $400$ vectors.  For each of the  remaining $600$ vectors, we randomly 
 choose an integer from the range $1$ to $\tb$,    choose those many positions in the dimension,   set them to $1$, and set the remaining positions to $0$.  Thus, we obtained $1000$ vectors of dimension $100000$ which we used as an
input matrix. 

\paragraph{$k$--$\NN$-- similarity}
  A dataset for this experiment consist of a random query vector $\Sq$; $249$ vectors whose similarity with $\Sq$ is high;  and $750$ other vectors. 
We first generated a query vector $\Sq$ of sparsity between $1$ and $\tb$, then using the procedure mentioned above we generated $249$ vectors whose similarity with $\Sq$ is high. Then we generated $750$ random 
  vectors of sparsity is at most  $\tb$. 
  
  \paragraph{Data representation}  We can imagine synthetic dataset as a binary matrix of dimension $100000\times 1000$. However, for ease and efficiency of implementation, we use a compact representation which consist of  a list of lists. The the number of lists is equal to the number of vectors in the binary matrix, and within  each list we just store the indices (co-ordinate) where $1$s are present.  We use this list as an input for both $\BCS$ and minhash.

  \subsubsection{Evaluation metric} We performed two experiments on synthetic dataset -- 1) fixed  sparsity   while  varying compression length,  
  and 2)   fixed compression length while  varying sparsity.
  We present these experimental results  in 
   Figures~\ref{fig:CL}, \ref{fig:Sparsity} respectively. In both of these experiments, we compare and contrast  the performance $\BCS$ with minhash on \textit{accuracy, compression time,} and \textit{search time} parameters. 
All-pair-similarity experiment result requires a quadratic search -- generation of all possible candidate pairs and then pruning those whose similarity score is high, and  the corresponding search time is the time required to compute  all such pairs. While $k$--$\NN$-- similarity experiment requires a linear search and pruning with respect to the query vector $\Sq$, and the corresponding  search time is the time required to compute  such vectors.
   In order to calculate the accuracy on a given support threshold value, we first run a simple brute-force  search algorithm on the entire (uncompressed) dataset, and obtain the ground truth result. 
   Then, we calculate the  Jaccard ratio between  our algorithm's result/  minhash's result,  with the corresponding exact result, and compute the accuracy. 
First row of the plots are     "accuracy" \textit{vs} "compression length/sparsity".
   The second row of the plots are    "compression time" \textit{vs} "compression length/sparsity". 
   Third row of plot shows comparison with respect to "search time" \textit{vs} "compression length/sparsity". 
   
\begin{figure*}[h]
\centering
\includegraphics[scale=.51]{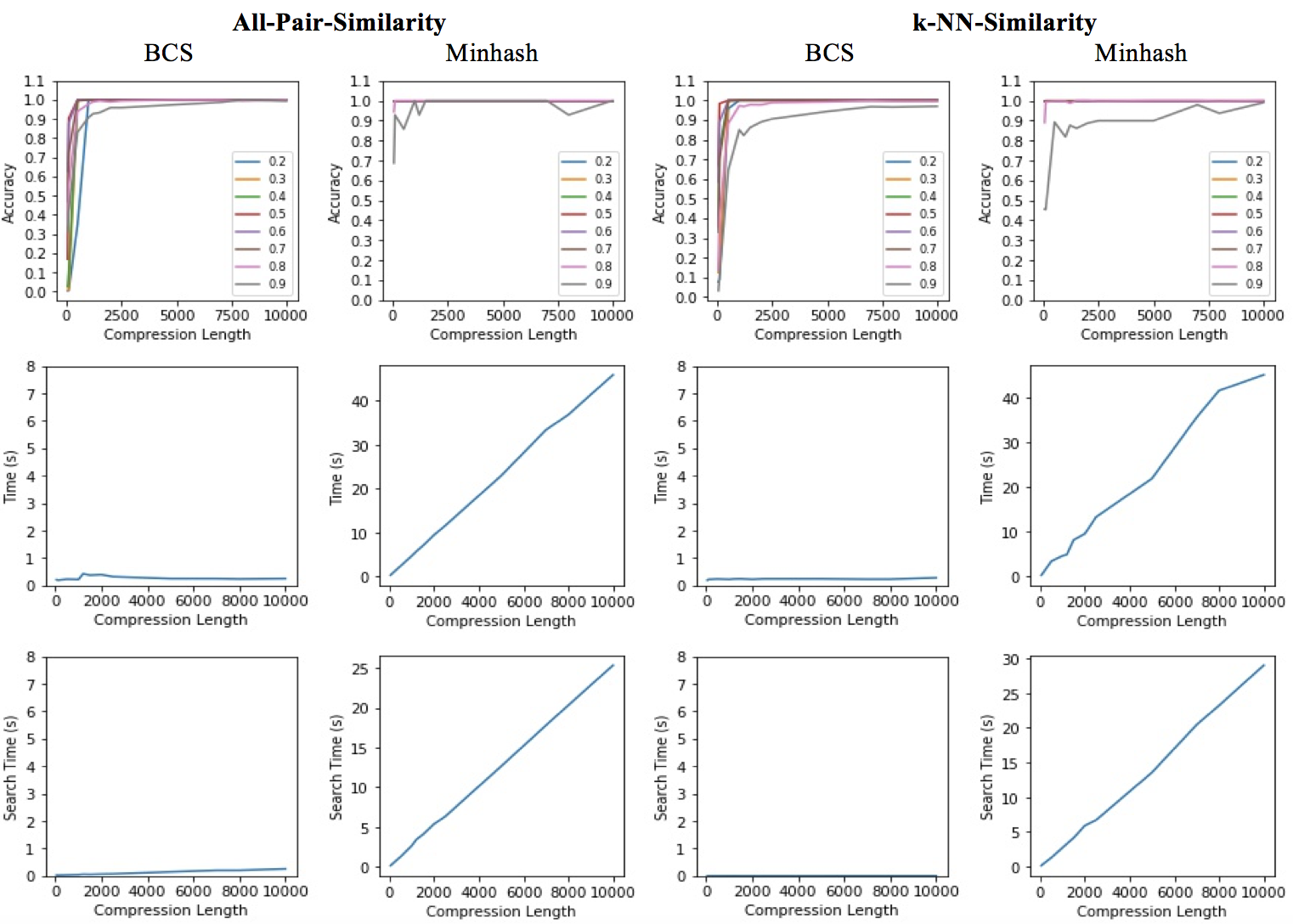}
\caption{Experiments on Synthetic Data:  for  fixed  sparsity $\tb=200$ and varying compression length}
\label{fig:CL}
\end{figure*}

\begin{figure}[h]
\centering
\includegraphics[scale=.51]{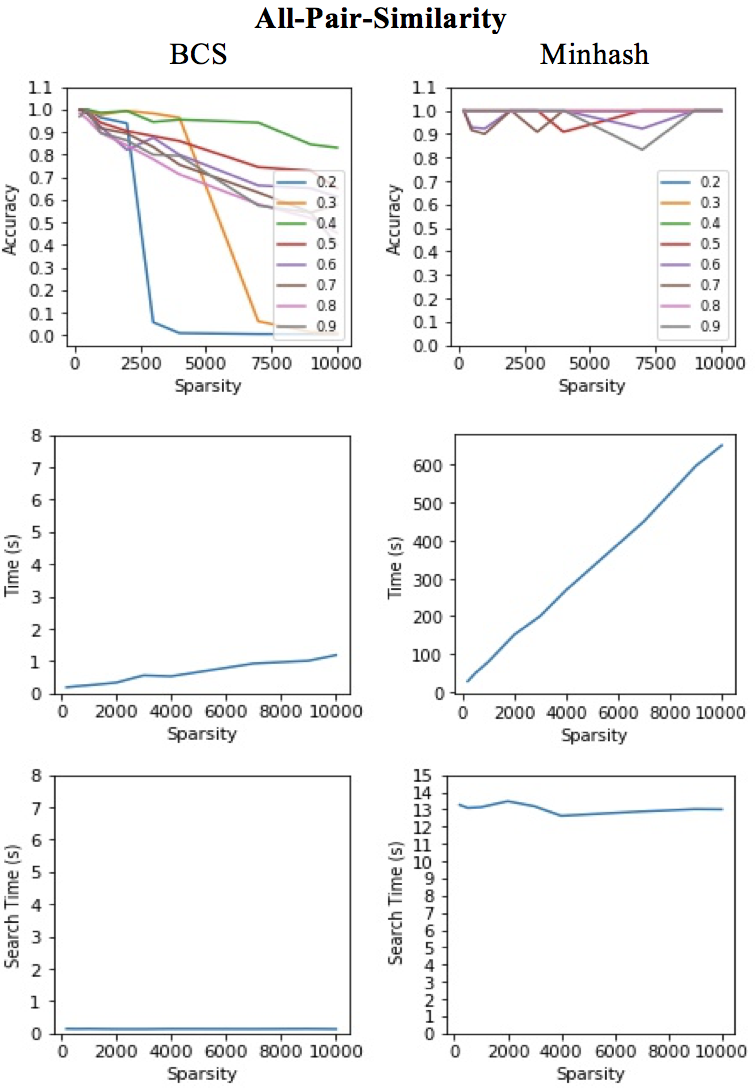}
\caption{Experiments on Synthetic Data:  for  fixed  compression length $5000$ and varying sparsity}
\label{fig:Sparsity}
\end{figure}

   \subsubsection{Insight}

In Figure~\ref{fig:CL}, we plot the result of $\BCS$ and minhash for all-pair-similarity and $k$--$\NN$-- similarity. For this experiment, we fix the sparsity $\tb=200$ and generate the datasets as stated above. We compress the datasets using $\BCS$ and minhash for a range of compression lengths from $50$ to $10000$. It can be observed that $\BCS$ performs remarkably well on the parameters of compression time and search time. Our compression time remains almost constant at $0.2$ seconds in contrast to the compression time of minhash, which grows linearly   to almost $50$ seconds. On an average, $\BCS$ is $90$ times faster than minhash. Also accuracy for $\BCS$ and minhash is almost equal above compression length $300$, but in the window of $50-300$ minhash performs slightly better than $\BCS$. Further, the search-time on $\BCS$ is also significantly less than minhash for all compression lengths. On an average search-time is $75$ times less than the corresponding minhash search-time. We obtain similar results for $k$--$\NN$-- similarity experiments.


 In  Figure~\ref{fig:Sparsity}, we plot the result of $\BCS$ and  minhash  for all-pair-similarity.  For this experiment, we generate datasets for different values of sparsity ranging from $50$ to $10000$. We compress these datasets using $\BCS$ and minhash to a fixed value of compression length $5000$. In all-pair-similarity, when sparsity value is below $2200$, average accuracy of BCS is above $0.85$. It starts decreasing after that value, at sparsity value is $7500$, the accuracy of $\BCS$ stays above $0.7$, on most of the threshold values. The compression time of $\BCS$ is always below $2$ seconds while compression time of minhash grows linearly with sparsity -- on an average compression time of $\BCS$ is around $550$ times faster than the corresponding minhash compression time. Further, we again significantly reduce search time  -- on an average our search-time is $91$ times less than minhash. We obtain similar results for $k$--$\NN$-- similarity experiments.

\subsection{Results on Real-world Data}\label{subsection:realworld}
\subsubsection{Dataset Description:} We compare the performance of $\BCS$ with minhash on the task of retrieving top-ranked elements based on Jacquard similarity.  We performed this experiment on publicly available high dimensional sparse dataset of UCI machine learning repository~\cite{UCI}.
We used four publicly available dataset from UCI repository - namely, NIPS full papers, KOS blog entries, Enron Emails, and  NYTimes news articles. These datasets are binary "BoW" representation of the corresponding text corpus. We consider each of these datasets as a binary matrix, where each document corresponds to a binary vector,  that is  if a particular word is present in the document, then the corresponding entry is $1$ in that position, and it is $0$ otherwise. For our experiments, we consider the entire corpus of NIPS and KOS dataset, while for ENRON and NYTimes we take a uniform sample of $10000$ documents from their corpus. 
We mention their cardinality, dimension, and sparsity in Table~\ref{tab:UCI}.
\begin{table}
  \caption{Real-world dataset description}
  \label{tab:UCI}
  \begin{tabular}{cccl}
    \toprule
    Data Set &No. of   points&Dimension &Sparsity\\
    \midrule
    NYTimes news articles & $10000$ & $102660$ & $ 871$\\
    Enron Emails & $10000$ & $28102$ & $2021 $\\
    NIPS full papers: & $1500$ & $12419$ & $914$\\
     KOS blog entries & $3430$ & $6906$ & $457$\\
  \bottomrule
\end{tabular}
\end{table}

\subsubsection{Evaluation metric:} 
We split the dataset in two parts $90\%$ and $10\%$  -- the bigger partition is use to compress the data, and is referred as the \textit{training partition}, while the second one is use to evaluate the quality of compression and is referred as \textit{querying partition}. We call each vector of the querying partition as query vector.  For each query vector, we compute  the vectors in the training partition whose Jaccard similarity is higher than a certain threshold (ranging from $0.1$ to $0.9$). We first do this on the uncompressed data inorder to find the underlying ground truth result -- for every query vector compute all vectors that are similar to them. Then we compress the entire data,  on various values of compression lengths,  using our compression scheme/minhash. For each  query vector, we calculate the accuracy of  $\BCS$/minhash by taking Jaccard ratio between the set outputted by $\BCS$/minhash, on various values of  compression length, with set outputted a simple linear search algorithm on entire data. This gives us the accuracy of compression of that particular query vector. We repeat this for every vector in the querying partition, and take the average, and we plot the average accuracy for each value in support threshold and compression length.  
We also note down the corresponding compression time on each of the compression length for both $\BCS$ and minhash.   Search time is time required to do a linear search on the compressed data,  we compute the search time for each of the query vector and take the average in the case of both $\BCS$ and  minhash. Similar to synthetic dataset result, we plot the comparison between our algorithm with minhash on following three points -- 1) accuracy \textit{vs} compression length, 2) compression time \textit{vs} compression length, and 3) search time \textit{vs} compression length. 

\begin{figure*}[h]\label{figure:(ENRON+NYTimes)}
\centering
\includegraphics[scale=.51]{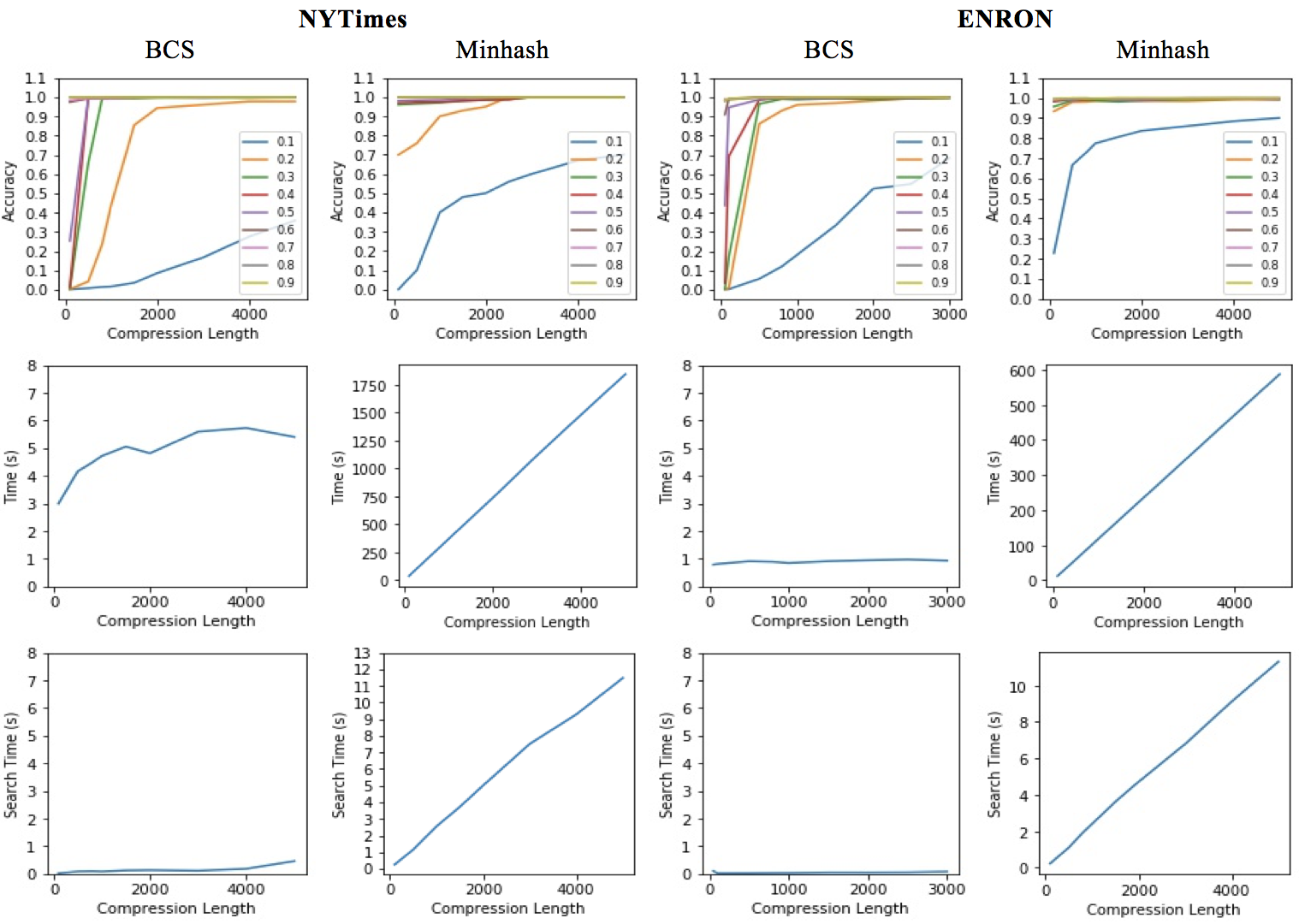}
\includegraphics[scale=.51]{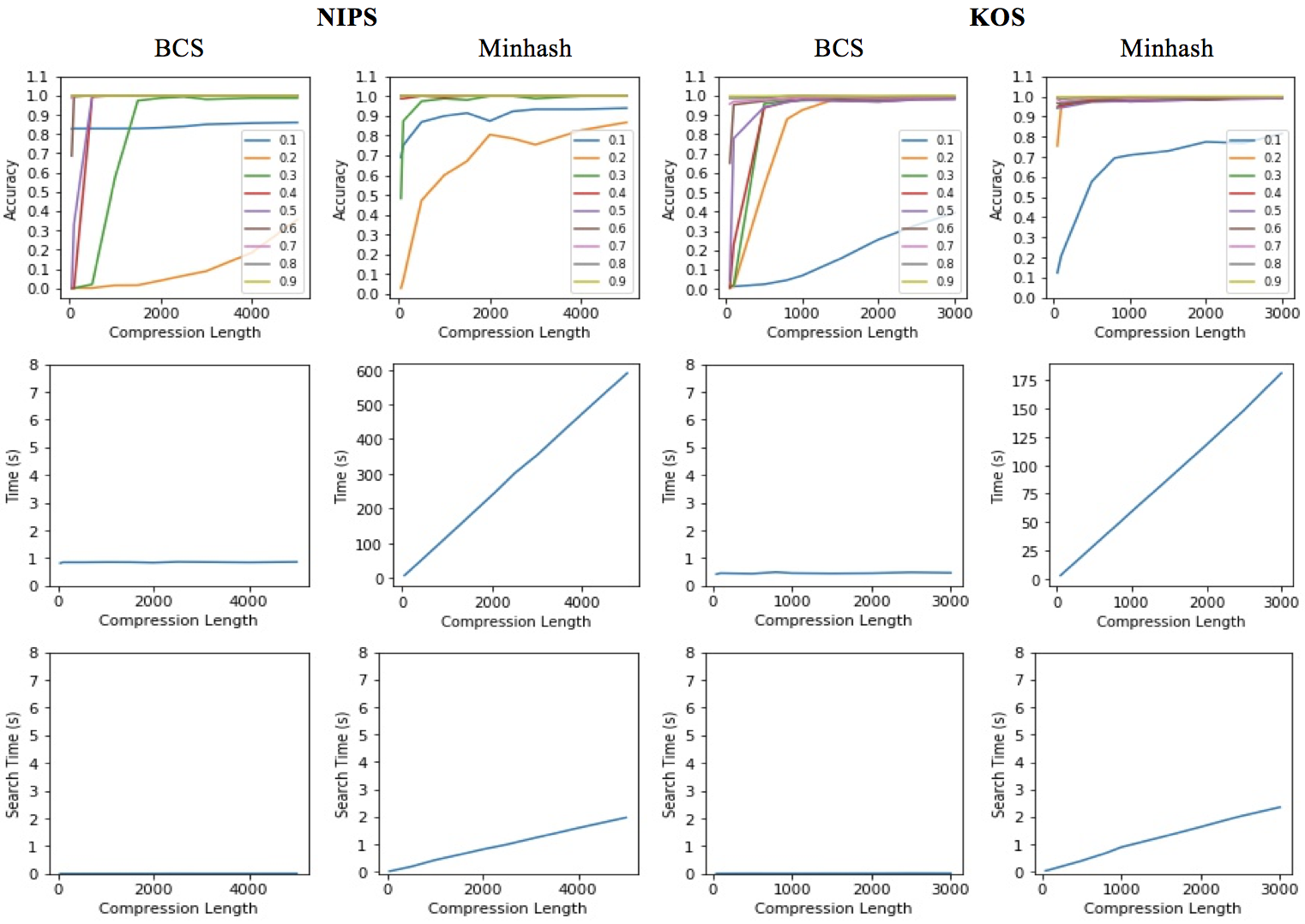}
\caption{Experiments on Real-world datasets~\citep{UCI}. }
\label{fig:realworld}
\end{figure*}

\subsubsection{Insights}

We plot experiments of real world dataset~\cite{UCI} in Figure~\ref{fig:realworld}, and found that performance of $\BCS$ is similar to its performance on synthetic datasets. NYTimes is the sparsest among all other dataset, so the  performance of  $\BCS$ is relatively better as compare to other datasets. For NYTIMES dataset, on an average $\BCS$ is $135$ times faster than minhash, and search time for $\BCS$ is $25$ times less than search time for minhash. For $\BCS$ accuracy starts dropping below $0.9$ when data is compressed below compression length $300$. For minhash, accuracy starts dropping below compression compression length $150$. Similar pattern is observed for ENRON dataset as well, where  $\BCS$ is $268$ times faster than minhash, and a search on the compressed data obtained from $\BCS$ is $104$ times faster than search on data obtained from minhash. KOS and NIPS are dense, low dimensional datasets. However here also, for NIPS, our compression time is $271$ times faster and search-time is $90$ times faster as compared to minhash. For KOS, our compression time is $162$ times faster and search time is $63$ times faster than minhash.

To summarise, $\BCS$ is significantly faster than minhash in terms of both - compression time and search time while giving almost equal accuracy. Also, the amount of randomness required for $\BCS$ is also significantly less as compared to minhash. However, as sparsity is increased, accuracy of $\BCS$ starts decreasing slightly as compared to minhash.

\vspace{-0.3cm}
\section{Concluding remarks  and open questions}\label{sec:conclusion}
We showed that $\BCS$ is able to compress sparse, high-dimensional binary data while preserving the Jaccard similarity. It is considerably faster than the "state-of-the-art" minhash permutation, and  also maintains almost equal accuracy  while significantly reducing the amount of randomness required. Moreover, the compressed representation obtained from $\BCS$ is in binary form, as opposed to integer in case of minhash, due to which the space required to store the compressed data is reduced, and consequently leads to a faster  searches on the compressed representation.  Another major advantage of $\BCS$ is that its compression bound is independent of the dimensions of the data, and only grows polynomially with the sparsity and poly-logarithmically with the number of data points.  We  present a theoretical proof of the same  and complement it with rigorous and extensive experimentations. Our work leaves the possibility of several open questions -- improving the compression bound of our result, and extending it to other similarity measures.

\bibliographystyle{plain}
\bibliography{sample-bibliography} 

\end{document}